\newtheorem{lemma}{Lemma}[section]
\newcommand{\mC}{{\rm C}}
\newcommand{\mP}{{\rm P}}
\newcommand{\mI}{{\rm I}}
\newcommand{\mT}{{\rm T}}
\newcommand{\mA}{{\rm A}}
\newcommand{\mb}{{\rm b}}
\newcommand{\ms}{{s}}
\title{Privacy-Preserving Distributed Optimisation using Stochastic PDMM}
\name{Sebastian O.\ Jordan$^{1}$, Qiongxiu Li$^{2}$ and Richard Heusdens$^{1,3}$}
\address{$^{1}$ Delft University of Technology, the Netherlands \\$^{2}$ Tsinghua University, China \\$^{3}$ Netherlands Defence Academy, the Netherlands}
\begin{document}
\ninept
\addtolength{\abovedisplayskip}{-1.0mm}
\addtolength{\belowdisplayskip}{-1.0mm}
\maketitle
\begin{abstract} 
Privacy-preserving distributed processing has received considerable attention recently. The main purpose of these algorithms is to solve certain signal processing tasks over a network in a decentralised fashion without revealing private/secret data to the outside world. 
Because of the iterative nature of these distributed algorithms, computationally complex approaches such as (homomorphic) encryption are undesired.
Recently, an information theoretic method called subspace perturbation has been introduced for synchronous update schemes. The main idea is to exploit a certain structure in the update equations for noise insertion such that the private data is protected without compromising the algorithm's accuracy. This structure, however, is absent in asynchronous update schemes.
In this paper we will investigate such asynchronous schemes and derive a lower bound on the noise variance after random initialisation of the algorithm. This bound shows that the privacy level of asynchronous schemes is always better than or at least equal to that of synchronous schemes. Computer simulations are conducted to consolidate our theoretical results.
\end{abstract}
\begin{keywords} 
Privacy, distributed optimisation, convex optimisation, PDMM, asynchronous algorithms
\end{keywords}

\section{Introduction}
\label{sec:introduction}

In recent years, the large increase in connected devices and the data that is collected by these devices has caused a heightened interest in distributed signal processing. As many practical distributed networks are of heterogeneous nature, algorithms operating within these networks need to be simple, robust against network dynamics, asynchronous and resource efficient. Popular applications of distributed signal processing include telecommunication, wireless sensor networks, cloud computing and machine learning \cite{ric:08,zha:13,boy:06}.  
Due to the absence of a central processing point (fusion center), participants/agents/nodes use their own processing ability to locally carry out simple computations and transmit only the required and partially processed data to neighbouring nodes. The transmission of the data could potentially lead to information leakage.
In certain applications like medical systems, financial analysis, and smart grids \cite{giaconi2018privacy}, the data held by the agents/participants is privacy sensitive and needs to be protected from being revealed to the outside world.

One way to protect the data from being revealed is to 
apply some form of encryption, like homomorphic encryption \cite{gentry2009fully}, to encrypt the private data such that the adversary cannot decrypt the encrypted data within a certain period of time. Such approaches e.g. \cite{xu2015secure,zhang2019admm} are often computationally complex, because the encryption functions require high computation overhead.
Alternatively, we could rely on techniques based on information-theoretical principles where the adversary is assumed to have unbounded compute power but does not have sufficient information for inferring the secret/private data. In this paper we will focus 
on information-theoretic approaches because these methods are usually computationally lightweight as they do not require heavy encryption functions. 

Existing information-theoretic approaches for privacy-preserving distributed optimization can be classified into three classes.   First, one can use differential privacy \cite{dwork2006calibrating} to ensure user privacy, see e.g. \cite{huang2015differentially,nozari2018differentially,zhang2018improving,xiong2020privacy}. It guarantees privacy in the worst case that all nodes  in the network except one are not trustworthy. That is, a node’s privacy is ensured to be protected even though all other nodes in the network can collaborate. While this gives strong privacy guarantees, the downside is that it compromises accuracy. Second, for certain distributed optimization problems one can use secret-sharing based techniques \cite{Cramer2015}, like the work proposed in \cite{gupta2017privacy,li2019privacyA,tjell2020privacy}. Secret sharing works by splitting the sensitive information into a number of shares, and the privacy is guaranteed under the assumption that the sensitive information cannot be reconstructed unless a sufficient number of nodes cooperate together, which is assumed to be infeasible. The main advantage is that it achieves privacy at no accuracy cost.  However, it is shown to be only applicable to certain types of problems, and for each problem it requires a tailored adoption. In addition, it often requires the underlying graph to be fully-connected, which limits the scalability of the network in practice. The third privacy-preserving method is the subspace perturbation method \cite{Jane2020ICASSP,Jane2020LS,Jane2020TSP}, which  provides a lightweight yet general solution to ensure privacy by inserting noise in a subspace that is not affected during the iterations. The noise insertion will obfuscate the private data, but does not affect the optimization result. Moreover, the method is generally applicable to a variety of distributed optimization problems.

Although subspace perturbation seems to be a promising method for preserving the privacy, the results in \cite{Jane2020ICASSP,Jane2020LS,Jane2020TSP} were derived for a synchronous update setting. For asynchronous update schemes, however, there does not exist a subspace that is not affected during the iterations. In this paper we will investigate such asynchronous schemes and derive a lower bound on the noise variance that can be used to obfuscate the private data when the probability of a node being updated is uniform. The  bound guarantees that the obtained privacy level of asynchronous schemes will be higher than (at least equal to) the original subspace perturbation approach proposed for synchronous schemes, while the algorithm's accuracy still remains uncompromised.


\section{Notations and problem statement}\label{sec:3dist}
Consider a graphical model $\mathcal{G}=(\mathcal{V},\mathcal{E})$, where $\mathcal{V}$ is the set of vertices representing the nodes/agents in a network and $\mathcal{E}=\{(i,j): i, j\in \mathcal{V}\}$ is the set of (undirected) edges in the graph representing the communication links in the network. We use ${\cal E}_{\rm d}$ to denote the set of all directed edges (ordered pairs). Therefore, $|\mathcal{E}_{\rm d}|=2|\mathcal{E}|$. 
 Hence, given a graph $G=(\mathcal{V},\mathcal{E})$, only neighbouring nodes are allowed to communicate with each other directly. 
We use $\mathcal{N}_i$ to denote the set of all neighbouring nodes of node $i$, i.e., $\mathcal{N}_i=\{j:(i,j)\in \mathcal{E}\}$ and $d_i = |{\cal N}_i|$ denotes the number of neighbouring nodes of node $i$. 

Given a vector $x$, we use $\|x\|$ to denote its $\ell_2$-norm. When $x$ is updated iteratively, we write $x^{(k)}$ to indicate the update of $x$ at the $k$th iteration. When we consider $x^{(k)}$ as a realisation of a random variable, the corresponding random variable will be denoted by $X^{(k)}$ (corresponding capital). 
Furthermore, we use the following notational conventions: a variable $z_i$ is related to node $i$; a variable $z_{i|j}$ is related to edge $(i,j)$ but held by node $i$. 

We would like to solve the following optimisation problem:
\begin{equation}
\begin{array}{ll} \text{minimise} & {\displaystyle \sum_{i\in {\cal V}} f_i(x_i)} \\\rule[4mm]{0mm}{0mm}
\text{subject to} & \mA_{ij}x_i + \mA_{ji}x_j = \mb_{ij}, \quad (i,j)\in \cal  E,
\end{array}
\label{eq:problem}
\end{equation}
where $f_i : \mathbb{R}^{n_i}\mapsto \mathbb{R}\cup\{\infty\}$ are CCP functions, $\mA_{ij}\in\mathbb{R}^{m_{ij}\times n_i}$ and $\mb_{ij}\in\mathbb{R}^{m_{ij}}$.
Let $n=\sum_i n_i$ and $m =\sum_{(i,j)}m_{i,j}$.
For simplicity, we will assume that $n_i=n_j = m_{ij}$, $\mA_{ij}=-\mA_{ji} = \mI$ if $i > j$, and $\mb_{ij} = 0$ for all $(i,j)\in{\cal E}$. This corresponds to simple (consensus)  edge constraints of the form $(\forall (i,j)\in {\cal E}) \,\, x_i=x_j$. The results, however, can straightforwardly be generalised to arbitrary dimensions and arbitrary (linear) edge constraints.

The local objective functions $f_i$  are a function of the primal optimisation variable $x_i$, but also of data related to node/agent $i$. 
Examples of such cost functions are $f_i(x_i) = \|x_i-\ms_i\|^2$ in the case of average consensus, where $\ms_i$ is the local measurement data that needs to be averaged among the nodes/agents, or $f_i(x_i) = \|\mA_ix_i-\mb_i\|^2$ in the case of linear regression, where the pair $\ms_i = (\mA_i,\mb_i)$ are the  independent (explanatory) and dependent (response) variables at node $i$. 
In certain applications like medical systems, financial analysis, and smart grids, the data is privacy sensitive and needs to be protected for being revealed to the outside world. In this paper we will focus on solving \eqref{eq:problem}  while protecting the secret/private data $s_i$ from being revealed to the outside world.

\section{Background}
In this section, we formulate the mathematical background of privacy-preserving distributed optimisation and introduce the primal-dual method of multipliers (PDMM) updating equations that will be used in the proof in our work. For more details regarding PDMM we refer to \cite{zha:18,sherson_derivation_2019}.

\subsection{PDMM}
PDMM is an iterative algorithm that can solve the optimisation problem stated in \eqref{eq:problem}. An insightful derivation of the algorithm using monotone operator theory is given in \cite{sherson_derivation_2019}.
As derived in \cite{sherson_derivation_2019}, PDMM can be formulated as a monotone inclusion problem that can be iteratively solved using Peaceman-Rachford splitting (see \cite[Sec. 26.4]{bauschke_convex_2017}). This leads to the following PDMM update equations:
\begin{align}
x^{(k+1)} &=\displaystyle \arg\min_x\left(f(x) + z^{(k)\raisebox{.4mm}{\scriptsize $\mT$\!}}\mC  x + \frac{c}{2}\|\mC  x\|^2\right),  \label{eq:pdmma}\\
z^{(k+1)} &= \mP z^{(k)} + 2c\mP\mC  x^{(k+1)}, \label{eq:pdmmc}
\end{align}
where $x\in\mathbb{R}^{n}$ is the primal variable, 
$z\in\mathbb{R}^{2m}$ an auxiliary variables, and $c>0$ is a constant that controls the convergence rate.
The matrix $\mC \in\mathbb{R}^{2m\times n}$ is a constraint matrix constructed as
$\mC =(c_1,c_2,\ldots, c_{|\mathcal{V}|})$, $c_i\in\mathbb{R}^{2m\times n_i}$, where $c_{i}(l)=\mA_{ij}$  if and only if $e_l = (i,j)\in \cal E$ and $i<j$, and $c_{i}(l+m)=\mA_{ij}$ if and only if $e_l = (i,j)\in \cal E$ and $i>j$. $\mP\in\mathbb{R}^{2m\times 2m}$ is a symmetric permutation matrix exchanging the first $m$ with the last $m$ rows.

Due to the construction of the PDMM algorithm, the update equations are separable across the nodes of the network and thus can be performed in a distributed/parallel manner. 
At every iteration, each node $i$ updates its local optimisation variable $x_i$, and computes, for every neighbouring node  $j \in \mathcal{N}_i$, an auxiliary variable $z_{j|i}$. In particular, \eqref{eq:pdmma}-\eqref{eq:pdmmc} can be computed as
\begin{align}
    &x_i^{(k+1)} = \displaystyle \arg\min_{x_i}\left(\!f_i(x_i) + \!\sum_{j\in{\cal N}_i} \!\!z_{i|j}^{(k)\raisebox{.5mm}{\scriptsize $\mT$\!\!}} \mA_{ij} x_i + \frac{c}{2}d_ix_i^{(k+1)} \!\right)\!\!,
    \label{eq:pdmmai} \\
    &(\forall j\in {\cal N}_i)\quad z_{j|i}^{(k+1)} =  z_{i|j}^{(k)} + 2c\mA_{ij}x_i^{(k+1)}.
    \label{eq:pdmmci}
\end{align}
PDMM converges for strongly convex and differentiable cost functions. This holds for both synchronous \cite{sherson_derivation_2019} and asynchronous \cite{jor:23} update schemes. To guarantee convergence for arbitrary closed, convex and proper (CCP) cost functions, we can apply operator averaging, resulting in the $\theta$-averaged auxiliary vector update 
\begin{equation}\label{eq:zaveraged}
    z_{j|i}^{(k+1)} = (1-\theta)z_{j|i}^{(k)}+\theta \big(z_{i|j}^{(k)} + 2c\mA_{ij}x_i^{(k+1)}\big),
\end{equation}
with $\theta\in(0,1)$.

\subsection{Threat model and privacy metric}

In this paper we will consider two widely-used adversary models: the eavesdropping and the passive (or honest-but-curious) adversary model. The passive adversary consists of a number of colluding nodes, referred to as {\em corrupt nodes}, which will follow the algorithm instructions but will use  received information to infer the input data of the other, non-corrupt nodes. We will refer to the latter as {\em honest nodes}. 
The eavesdropping adversary is usually neglected in existing schemes since eavesdropping can be prevented by using channel encryption \cite{dolev1993perfectly}. However, channel encryption is computationally expensive. For iterative algorithms where the communication channels are used many times, channel encryption is less attractive. We thus assume that the communication in the network is performed through non-secure channels, except for the communication during the initialisation of the network, as will be explained later.

Two commonly used metrics for measuring the privacy loss are $\epsilon$-differential privacy and mutual information \cite{cover2012elements}.  In this paper we choose mutual information as the individual privacy metric. The reason for it is twofold. Firstly,  it has been proven effective in the context of privacy-preserving distributed processing \cite{Jane2020TIFS}, and has been applied in various applications \cite{bu2020tightening}. Secondly, mutual information is closely related to $\epsilon$-differential privacy (see \cite{cuff2016differential} for more details), and is easier to realise in practice \cite{gotz2011publishing}. 
Let $S_i$ be the secret/private continuous random variable that relates to node $i$ having variance $\sigma_{S_i}^2$. Let $Y_i = S_i + N_i$ be what the adversaries can observe related to $S_i$, where $N_i$ is a random variable that is statistically independent of $S_i$, acting as noise obfuscating $S_i$. 
Using $I(S_i; Y_i)$ to denote the mutual information between $S_i$ and $N_i$, we can define $\delta$-level privacy as $I(S_i; Y_i)\leq \delta$.

\section{Privacy preservation}
\label{sec:subspace}

To analyse what information the adversaries can deduce, we can express the primal update equation \eqref{eq:pdmmai} as
\begin{equation}
  0 \in \partial f_i(x_i^{(k+1)}) + \sum_{j\in{\cal N}_i} \mA_{ij} z_{i|j}^{(k)} + cd_ix_i^{(k+1)}.
  \label{eq:opt}
\end{equation}
The private data of the nodes is present in the objective function $f_i$, so the only term in \eqref{eq:opt} that contains private data is $\partial f_i(x_i^{(k+1)})$. Through knowledge of the PDMM algorithm, adversaries can use the
inclusion stated above to deduce information about the term $\partial f_i(x_i^{(k+1)})$. The data that is known to the
adversaries depends on the implementation of PDMM. 

Since the adversary can eavesdrop all communication channels, by inspection of \eqref{eq:opt}, transmitting the auxiliary variables $z_{i|j}^{(k)}$  would reveal $\partial f_i(x_i^{(k+1)})$, as $x_i^{(k+1)}$ can be determined from \eqref{eq:zaveraged}. Encrypting $z_{i|j}^{(k)}$ at every iteration would solve the problem but is computationally too demanding. To overcome this problem, only initial values $z_{i|j}^{(0)}$ are securely transmitted and $\Delta z_{i|j}^{(k+1)} = z_{i|j}^{(k+1)} - z_{i|j}^{(k)}$ are transmitted (without any encryption) during subsequent iterations \cite{Jane2020TSP,jane2021elsevier}. As a consequence,
$z_{i|j}^{(k+1)}$ can only be determined whenever 
$z_{i|j}^{(0)}$ is known.
Let ${\cal N}_{i,h}$ denote the set of honest neighbours of node $i$ and assume that ${\cal N}_{i,h}\neq \emptyset$. That is, we assume that each honest node has at least one honest neighbour. 
After deducing the known terms from \eqref{eq:opt}, the adversaries can observe
\begin{equation}
    \partial f_i(x_i^{(k+1)}) + \sum_{j\in{\cal N}_{i,h}} \mA_{ij} z_{i|j}^{(k)}.
    \label{eq:priv}
\end{equation}

By inspection of \eqref{eq:priv}, we conclude that we can obfuscate the secret/private data $s_i$ by making the variance of $Z_{i|j}$
sufficiently large. This can be achieved by initialising the auxiliary variables $Z_{i|j}^{(0)}$ with sufficiently large variance, assuming this variance remains sufficiently large during the iterations. In \cite{Jane2020ICASSP, Jane2020LS, Jane2020TSP} it has been shown that for a synchronous update scheme this is indeed the case. The reason for this is that at every two synchronous PDMM updates the auxiliary variables are only affected in the subspace $\Psi = {\rm ran}(\mC ) + {\rm ran}(\mP\mC )$ and leave  $\Psi^{\perp} = {\rm ker}(\mC ^\mT ) \cap {\rm ker}((\mP\mC )^\mT )$, the orthogonal complement of $\Psi$, unchanged. 
As shown in \cite{sherson_derivation_2019}, $\Pi_{\Psi} z^{(k)}\to z^*$ for any $z^{(0)}$, where $z^{(k)}_{\Psi} = \Pi_{\Psi} z^{(k)}$ denotes the orthogonal projection of $z^{(k)}$ onto $\Psi$. Because of this the term $\Pi_{\Psi} z^{(k)}$ will eventually always have the same value so that $Z^{(k)}_{\Psi}$ will have zero variance, and cannot be used to obfuscate the private data. For the component $z_{\Psi^\perp}^{(k)}$ in the orthogonal subspace $\Psi^\perp$, however, it
has been shown in \cite{jane2021elsevier} that 
\[
    {\mathbb{E}} \left(Z_{\Psi^\perp}^{(k)} Z_{\Psi^\perp}^{(k)^\mT }\right) = \Pi_{\Psi^\perp}\!\!\left( \frac{\sigma^2}{2} \left((\mI+\mP) + |1-2\theta|^{2k}(\mI-\mP)\right)\!\right).
\]
This expression indicates that the variance of $Z_{\Psi^\perp}^{(k)}$ has a nonzero asymptotic  component that depends on the initialisation variance, and can be made arbitrarily large to obtain $\delta$-level privacy.

For a stochastic update scheme, however, the $z$-updates are not restricted to the subspace $\Psi$ anymore \cite{jor:23}, and deriving a bound for the variance of the auxiliary variable $z$ in this case is more involved than in the synchronous case. In the next section we will derive a lower bound for the variance of $Z_{\Psi^\perp}^{(k)}$  in a stochastic setting\footnote{Since in this framework we transmit differential auxiliary variables $\Delta z_{i|j}$, we assume that there are no transmission losses. With such losses the algorithm as presented here will fail to converge.}. 

\section{Privacy preservation with stochastic PDMM}
\label{sec:bound}

Following \cite{jor:23}, stochastic updates can be defined by assuming that each auxiliary variable $z_{i|j}$ can be updated based on a Bernoulli random variable $U_{i|j}\in\{0,1\}$. 
Collecting all random variables $U_{i|j}$ in the random vector $U\in\mathbb{R}^{2m}$, following the same ordering as the entries of $z$, 
let $(U^{(k)})_{k\in\mathbb{N}}$ denote an i.i.d.\ random process defined on a common probability space $(\Omega,{\cal A})$.
Hence, $U^{(k)}(\omega) \subseteq \{0,1\}^{2m}$ indicates  which entries of $z^{(k)}$ will be updated at iteration $k$,
where we assume that at every iteration, entry $z_{i|j}^{(k)}$ has nonzero probability to be updated. 
With this, a stochastic PDMM iteration can be expressed as
\begin{equation}
    Z^{(k+1)} = \left(\mI-\theta U^{(k+1)}\right)Z^{(k)}+ \theta U^{(k+1)}(\mP Z^{(k)} + 2c\mP\mC  X^{(k+1)}).
    \label{eq:stoch}
\end{equation}
We will utilise the following lemma and some properties of the conditional expectation to derive a lower bound on the variance of $Z_{\Psi^\perp}^{(k)}$ for stochastic PDMM.

\begin{lemma}
Let $\Pi_{\Psi^\perp}$ denote the projection onto the subspace $\Psi^\perp$ and $\mP $ denote the PDMM permutation matrix. Then 
\begin{itemize}
    \item[\rm a)] $\displaystyle \Pi_{\Psi^\perp}\mP = \mP\Pi_{\Psi^\perp}$,
    \item[\rm b)] $\displaystyle \Pi_{\Psi^\perp} \mathbb{E}(X) = \mathbb{E} \big( \Pi_{\Psi^\perp} X\big)$,
    \item[\rm c)] $\displaystyle \Pi_{\Psi^\perp} \big( \mP\mC  x^{(k+1)} \big) = 0$.
\end{itemize}
\label{lem:1}
\end{lemma}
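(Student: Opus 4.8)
The plan is to treat the three claims separately, since parts b) and c) follow almost immediately from the definitions while part a) carries the only real content. I would dispatch b) first: because $\Pi_{\Psi^\perp}$ is a fixed (deterministic) matrix that does not depend on the random outcome, it can be pulled through the expectation by linearity, giving $\mathbb{E}(\Pi_{\Psi^\perp} X)=\Pi_{\Psi^\perp}\mathbb{E}(X)$ directly. For c), I would note that $\mP\mC x^{(k+1)}\in{\rm ran}(\mP\mC)$, and since ${\rm ran}(\mP\mC)$ is one of the two summands in $\Psi={\rm ran}(\mC)+{\rm ran}(\mP\mC)$, we have $\mP\mC x^{(k+1)}\in\Psi$; the orthogonal projection onto $\Psi^\perp$ annihilates every vector of $\Psi$, so the expression vanishes.

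The crux is part a). The key structural fact I would exploit is that $\mP$ is a symmetric permutation matrix exchanging the first $m$ with the last $m$ rows, hence an orthogonal involution: $\mP=\mP^\mT=\mP^{-1}$ and $\mP^2=\mI$. First I would show that $\Psi$ is $\mP$-invariant. Since $\mP\,{\rm ran}(\mC)={\rm ran}(\mP\mC)$ and $\mP\,{\rm ran}(\mP\mC)={\rm ran}(\mP^2\mC)={\rm ran}(\mC)$, applying $\mP$ merely interchanges the two summands defining $\Psi$, so $\mP\Psi=\Psi$. Because $\mP$ is orthogonal, invariance of $\Psi$ transfers to its orthogonal complement, giving $\mP\Psi^\perp=\Psi^\perp$ as well. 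The final step is the standard fact that an orthogonal projection onto a subspace that is invariant under an orthogonal operator commutes with that operator: writing an arbitrary $x=x_\Psi+x_{\Psi^\perp}$ with $x_\Psi\in\Psi$, $x_{\Psi^\perp}\in\Psi^\perp$, invariance gives $\mP x_\Psi\in\Psi$ and $\mP x_{\Psi^\perp}\in\Psi^\perp$, so that $\Pi_{\Psi^\perp}\mP x=\mP x_{\Psi^\perp}=\mP\Pi_{\Psi^\perp}x$.

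The main obstacle, though a mild one, is keeping the invariance argument in a) clean: one must use that $\mP$ is orthogonal (not merely that $\Psi$ is $\mP$-invariant) in order to conclude that $\Psi^\perp$ is invariant too, and then to justify the commutation step. Everything else reduces to linearity of expectation and the defining property of orthogonal projections.
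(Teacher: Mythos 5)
Your proposal is correct, and parts b) and c) coincide with the paper's own one-line arguments (linearity of expectation, and $\mP\mC x^{(k+1)}\in\Psi$ so the projection onto $\Psi^\perp$ kills it). The difference is in part a): the paper does not prove it at all, but simply cites Lemma 5.2 of an earlier companion paper, whereas you supply a self-contained argument. Your argument is sound and is the natural one: since $\mP$ is a symmetric permutation matrix it is an orthogonal involution, so $\mP\,{\rm ran}(\mC)={\rm ran}(\mP\mC)$ and $\mP\,{\rm ran}(\mP\mC)={\rm ran}(\mC)$, meaning $\mP$ merely swaps the two summands of $\Psi$ and hence $\mP\Psi=\Psi$; orthogonality of $\mP$ then transfers the invariance to $\Psi^\perp$, and the commutation $\Pi_{\Psi^\perp}\mP=\mP\Pi_{\Psi^\perp}$ follows from the decomposition $x=x_\Psi+x_{\Psi^\perp}$ with both components mapped by $\mP$ into their respective subspaces. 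You are also right to flag the subtle point: $\mP$-invariance of $\Psi$ alone does not give invariance of $\Psi^\perp$; one genuinely needs $\mP^\mT=\mP^{-1}$ (take $y\in\Psi^\perp$ and $w\in\Psi$, then $\langle \mP y,w\rangle=\langle y,\mP^\mT w\rangle=\langle y,\mP w\rangle=0$ since $\mP w\in\Psi$). In short, what the paper buys with a citation, you buy with two lines of linear algebra; your version has the advantage of being verifiable without consulting the external reference.
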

\begin{proof}
For a), see \cite[Lemma 5.2]{jane2021elsevier}. Item b) follows from the linearity of the expectation operator. Item c) follows from the fact that $\mP \mC  x^{(k+1)} \in \Psi$.
\end{proof}
To simplify the analysis, we make use of the observation that in expected value stochastic PDMM with uniform updating probabilities, that is, $(\forall i\in {\cal V}) (\forall j\in{\cal N}_i) \,\, {\rm Pr}(\{U_{i|j}^{(k)} = 1 \})=\mu$, is equivalent to $\mu$-averaged PDMM. 
This assumption holds, for example, in the case of asynchronous PDMM where one node is selected uniformly at random and updated each iteration, so that $\mu = 1/|\cal V|$. 

Let $({\cal A}_k)_{k\geq 1}$ be a filtration on $(\Omega,{\cal A})$ such that
${\cal A}_k := \sigma\{U^{(m)}\,:\, m\leq k\}$,
the $\sigma$-algebra generated by the random variables $U^{(1)},\ldots,U^{(m)}$ and thus ${\cal A}_k \subseteq {\cal A}_l$ for $k\leq l$. 
By taking the conditional expectation of \eqref{eq:stoch} with respect to ${\cal A}_0$, and using the fact that $U^{(k+1)}$ is independent of $Z^{(k)}$ and $X^{(k+1)}$, we have that
\begin{align*}
\mathbb{E}(Z^{(k+1)} | {\cal A}_0)  &= \left(\mI-\theta \mathbb{E}(U^{(k+1)})\right)\mathbb{E}(Z^{(k)} | {\cal A}_0) + \nonumber \\
&\hspace{-5mm}\theta\mathbb{E}(U^{(k+1)})\!\left(\mP \mathbb{E}(Z^{(k)} | {\cal A}_0) + 2c\mP\mC  \mathbb{E}(X^{(k+1)} | {\cal A}_0)\right) \nonumber \\
&= (1-\theta\mu) \mathbb{E}(Z^{(k)} | {\cal A}_0) + \nonumber \\
&\hspace{-5mm}\theta\mu\!\left(\mP \mathbb{E}(Z^{(k)} | {\cal A}_0) + 2c\mP\mC  \mathbb{E}(X^{(k+1)} | {\cal A}_0)\right),
\end{align*}
where for the last step we assume equal updating probabilities. As discussed in Section~\ref{sec:subspace}, the part of $Z$ in the subspace $\Psi^\perp$ is relevant for privacy preservation.
Hence, we have by Lemma~\ref{lem:1} that 
\begin{align}
\mathbb{E}(Z_{\Psi^\perp}^{(k)} | {\cal A}_0) 
&=  \left((1-\theta\mu) \mI + \theta \mu \mP\right) \mathbb{E}(Z_{\Psi^\perp}^{(k-1)} | {\cal A}_0) \nonumber \\
&= \left((1-\theta\mu) \mI + \theta \mu \mP\right)^{k} Z_{\Psi^\perp}^{(0)}\nonumber \\
&= \frac{1}{2}\left( (\mI+\mP) + (1-2\theta\mu)^{k} (\mI-\mP)\right)Z_{\Psi^\perp}^{(0)},
\label{eq:EZA}
\end{align}
where the last equality is shown in \cite{jane2021elsevier} using the eigenvalue decomposition of $(1-\theta\mu) \mI + \theta \mu \mP$. Hence, we have
\begin{align*}
\mathbb{E} \left(Z_{\Psi^\perp}^{(k)} Z_{\Psi^\perp}^{(k)^\mT }\right)_{(i|j),(i|j)} &= \mathbb{E} \left(\left(Z_{\Psi^\perp}^{(k)}\right)_{(i|j)}^2\right) \\
&= \mathbb{E} \left( \mathbb{E} \left( \left(Z_{\Psi^\perp}^{(k)}\right)_{(i|j)}^2 |{\cal A}_0 \right) \right) \\
&\geq \mathbb{E} \left( \mathbb{E} \left( \left(Z_{\Psi^\perp}^{(k)}\right)_{(i|j)} |{\cal A}_0 \right)^2 \right),
\end{align*}
where the last inequality follows from Jensen's inequality. Hence, if the auxiliary
variable is randomly initialised in such a way that $\mathbb{E}(Z^{(0)} Z^{(0)^\mT }) = \sigma^2 \mI$, we have that
\begin{align*}
{\rm diag}\!\left( \mathbb{E} \left(Z_{\Psi^\perp}^{(k)} Z_{\Psi^\perp}^{(k)^\mT }\right) \right) &\\
&\hspace{-20mm}\geq  {\rm diag} \!\left( \mathbb{E} \left( \mathbb{E} \left( Z_{\Psi^\perp}^{(k)} |{\cal A}_0 \right) \mathbb{E} \left( Z_{\Psi^\perp}^{(k)} |{\cal A}_0 \right)^{\!\mT}\right) \!\right) \\
&\hspace{-20mm}= {\rm diag} \! \left( \Pi_{\Psi^\perp}\!\!\left( \frac{\sigma^2}{2} \left((\mI+\mP) + |1-2\theta\mu|^{2k}(\mI-\mP)\right)\!\right)\!\right),
\end{align*}
where the last equality follows from substitution of \eqref{eq:EZA}, Lemma~\ref{lem:1}, and the fact that $\Pi_{\Psi^\perp}\Pi_{\Psi^\perp}^\mT = \Pi_{\Psi^\perp}$. This inequality provides a lower bound for the variance of the auxiliary variable that is based on the initialisation variance $\sigma^2$. Note that this bound is equal to the variance of the auxiliary variables for synchronous $\theta\mu$-averaged PDMM. Since for most practical networks $\theta\mu\ll 1$,
 the variance will be lower bounded by the variance of synchronous PDMM.

\begin{figure}[t]
\centering
\includegraphics[width=.45\textwidth]{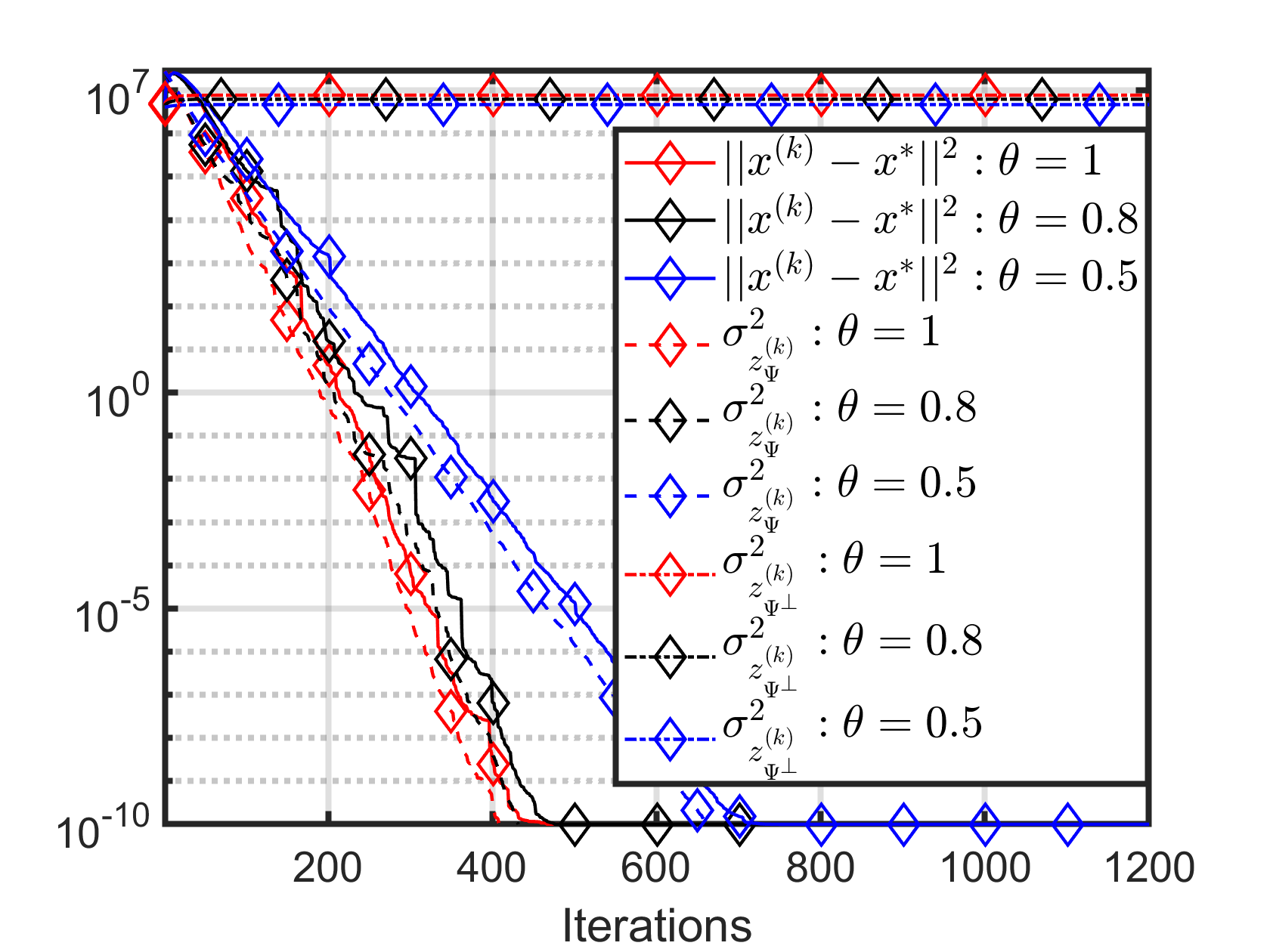}
 \centering
 \mbox{}\vspace*{-.8\baselineskip}
\caption{Convergence of the variables $x^{(k)}$, $Z_{\Psi}^{(k)}$, and $Z_{\Psi^\perp}^{(k)}$
using asynchronous PDMM for $\theta=1, 0.8$, and $0.5$. }
\label{fig:x}
\vskip -10pt
\end{figure}
\section{Numerical results}
To validate the variance bound derived in Section~\ref{sec:bound}, we perform distributed averaging simulations using asynchronous PDMM over a random geometric graph with $n=10$ nodes having a communication radius $r = \sqrt{2\log n/n}$
so that the graph is connected with probability at least $1-1/n^2$ \cite{dall2002random}.  
The nodes are being activated each iteration with uniform probability $\mu = 1/|\cal V|$.
The values in the private measurement data $s$ were randomly generated from a zero-mean unit-variance Gaussian distribution and the
auxiliary data $Z^{(0)}$ was initialized with zero mean and $\sigma^2_{Z^{(0)}} = 10^8$. To analyse the variance of $Z^{(k)}$, we performed 100 Monte Carlo runs with different initialisation $z^{(0)}$ where, for consistency, the sequence of selected nodes was kept constant for each Monte Carlo run. 

Figure~\ref{fig:x} shows convergence results of both the primal variable $x$ and the auxiliary variable $z$  as a function of the number of iterations for different values of $\theta$. The primal variable $x^{(k)}$ converges to the correct solution and has, as expected, a decreasing convergence rate for increasing values of $\theta$.  As for the auxiliary variable $z$,
 we can see that the variance of $Z_{\Psi}^{(k)}$ goes to zero as $k\to\infty$, whereas the variance of $Z_{\Psi^\perp}^{(k)}$ remains nonzero. Figure~\ref{fig:compare} shows the 
evaluation of $Z_{\Psi^\perp}^{(k)}$ in more detail and shows a comparison with the results obtained with synchronous PDMM \cite{jane2021elsevier}.
The results show that in both asynchronous and synchronous PDMM the variance of $Z_{\Psi^\perp}^{(k)}$ is bounded below by a value depending on the variance of the initialisation $Z_{\Psi^\perp}^{(0)}$, and that this variance is higher in the case of an asynchronous update scheme.

Figure~\ref{fig:mi} shows the normalized mutual information based on \eqref{eq:priv} by assuming there is one honest neighbor $j$ in the neighborhood of node $i$ for different values of the variance of
$Z^{(0)}$. The results show that the larger  $\sigma^2_{Z^{(0)}}$ is, the smaller the mutual information is between the private data and the observations of the adversary. It also confirms the fact that asynchronous update schemes reveal less information about the private data during the iterations than synchronous schemes.
\vspace{-.2\baselineskip}
\begin{figure}[t]
\centering
\includegraphics[width=.42\textwidth]{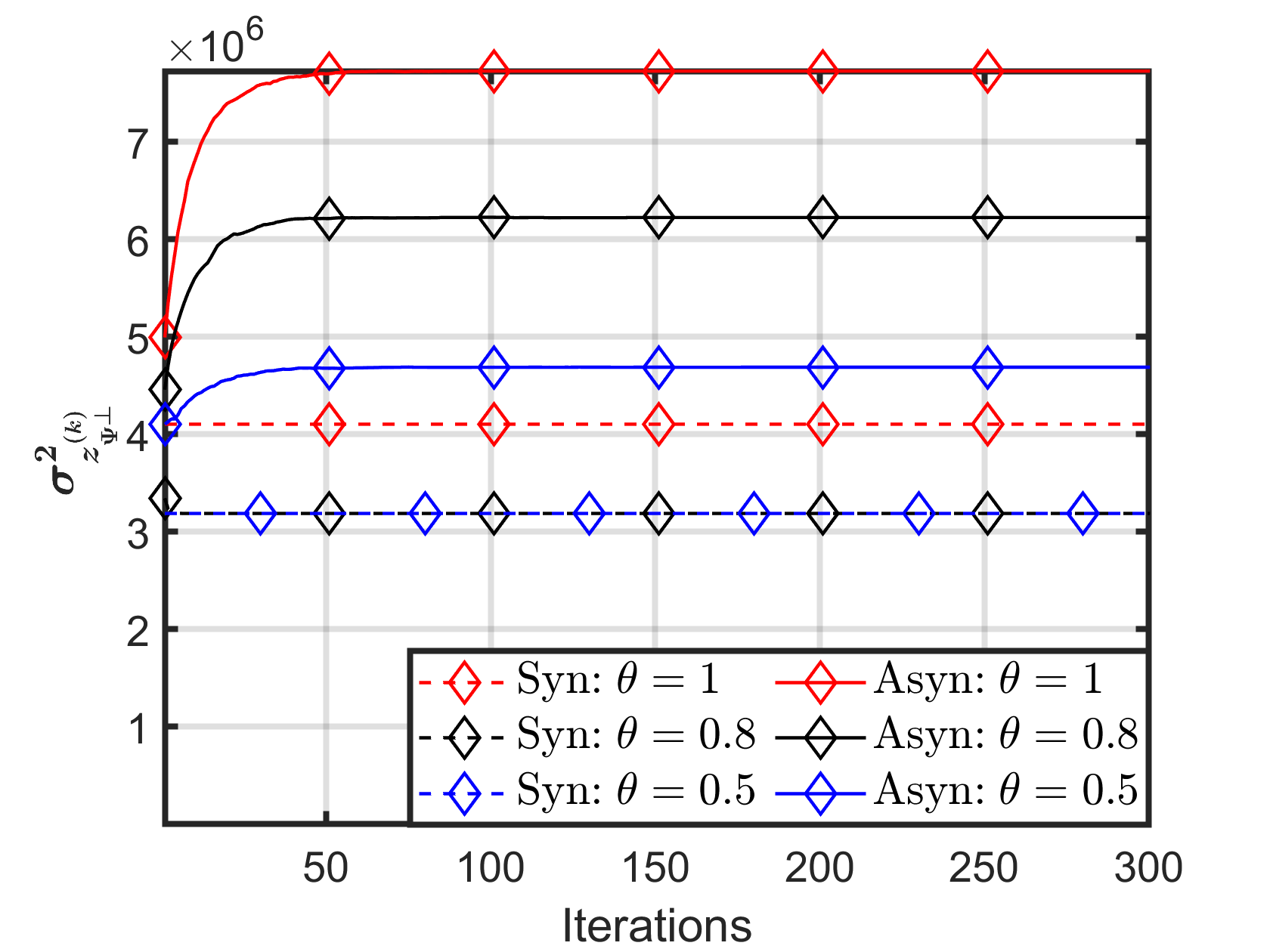}
 \centering
 \mbox{}\vspace*{-.3\baselineskip}
\caption{Variance of $Z_{\Psi^\perp}^{(k)}$ using synchronous and asynchronous PDMM 
for $\theta=1, 0.8$, and  $0.5$. }
\label{fig:compare}
\vskip -5pt
\end{figure}
\begin{figure}[t]
\centering
\includegraphics[width=.42\textwidth]{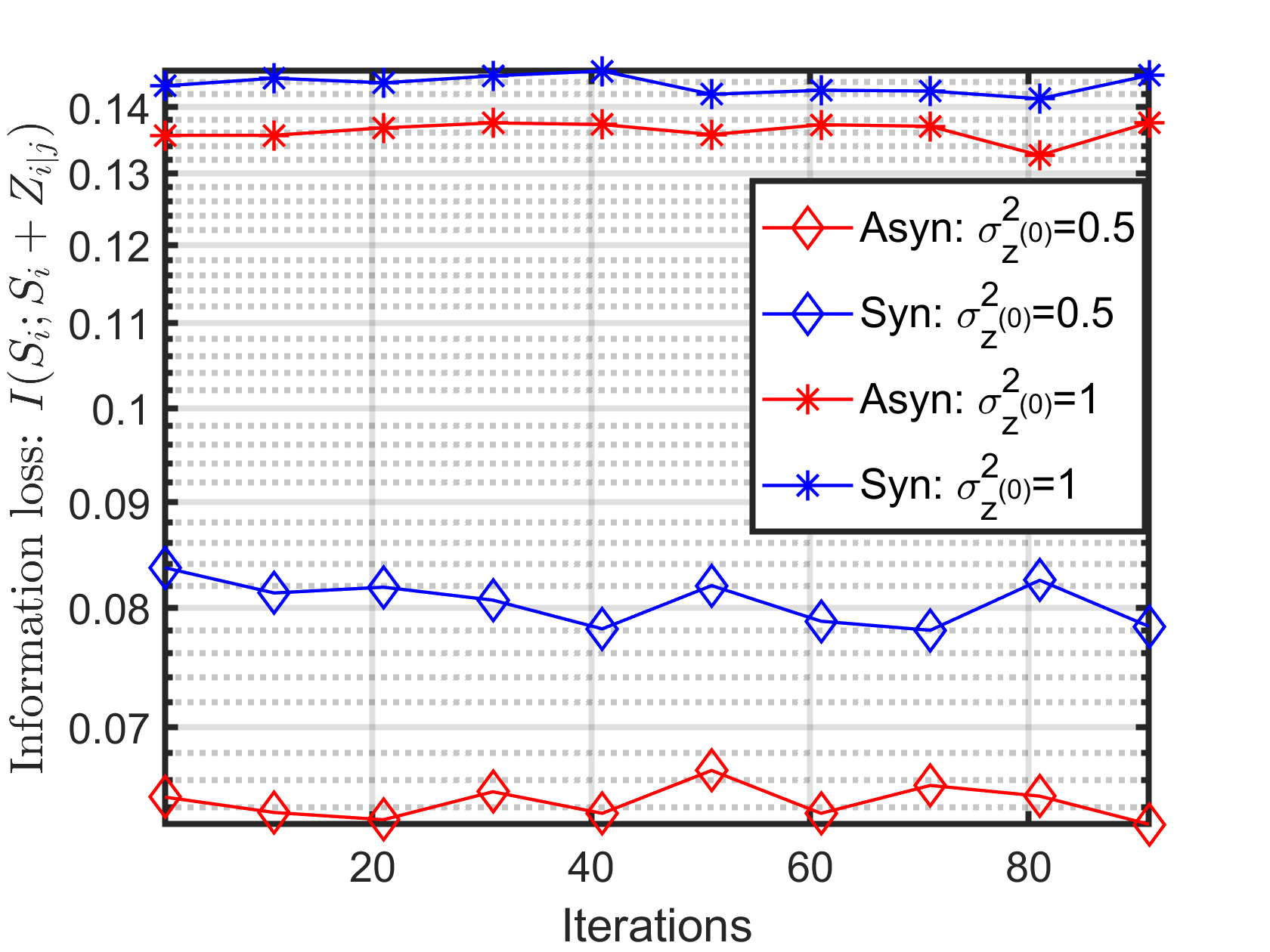}
 \centering
 \mbox{}\vspace*{-.8\baselineskip}
\caption{Information loss in asynchronous and synchronous PDMM for $\sigma^2_{Z^{(0)}}=0.5$ and $\sigma^2_{Z^{(0)}}=1$.   }
\label{fig:mi}
\vskip -10pt
\end{figure}


\vspace{-.6\baselineskip}
\section{Conclusion}

In this paper we investigated the privacy preservation of asynchronous distributed optimisation schemes. We extended the results of subspace perturbation, originally introduced for synchronous schemes, to stochastic update schemes. We derived a lower bound on the noise variance that can be used to obfuscate the private data when the probability of a node being updated is uniform. The bound guarantees that the obtained privacy level of asynchronous schemes will be higher than the original subspace perturbation approach, while the algorithm's accuracy still remains uncompromised. Computer simulations confirmed our theoretical findings.

\newpage
\bibliographystyle{IEEEbib}
\bibliography{ref}

\end{document}